\newcommand{\mcB}{\mathcal{B}}
\newcommand{\mcC}{\mathcal{C}}
\newcommand{\mcD}{\mathcal{D}}
\newcommand{\mcE}{\mathcal{E}}
\newcommand{\mcF}{\mathcal{F}}
\newcommand{\mbH}{\mathbb{H}}
\newcommand{\mcI}{\mathcal{I}}
\newcommand{\mcK}{\mathcal{K}}
\newcommand{\mcH}{\mbH}
\newcommand{\mcL}{\mathcal{L}}
\newcommand{\mbR}{\mathbb{R}}
\newcommand{\mbX}{\mathbb{X}}
\newcommand{\defeq}{\vcentcolon=}
\newcommand{\RR}{\mathbb{R}}
\newcommand{\ep}{\epsilon}
\newcommand{\de}{\delta}
\newcommand{\mscr}[1]{\mathcal{#1}}
\DeclareMathOperator{\E}{E}
\newtheorem{theorem}{Theorem}
\newtheorem{lemma}{Lemma}
\newtheorem{corollary}{Corollary}
\theoremstyle{definition}
\newtheorem{definition}{Definition}
\newtheorem{example}{Example}
\title{Elliptical Perturbations for Differential Privacy}
\author{
  Matthew Reimherr \thanks{Research supported in part by NSF DMS 1712826, NSF SES 1853209, and the Simons Institute for the Theory of Computing at UC Berkeley.} \\
  Department of Statistics\\
  Pennsylvania State University\\
  University Park, PA 16802 \\
  \texttt{mreimherr@psu.edu} \\
  \And
   Jordan Awan \thanks{Research supported in part by NSF SES-153443 and NSF SES-1853209.}\\
   Department of Statistics\\
   Pennsylvania State University\\
   University Park, PA 16802 \\
  \texttt{awan@psu.edu} \\ 
}
\begin{document}

\maketitle

\begin{abstract}
  We study elliptical distributions in locally convex vector spaces, and determine conditions when they can or cannot be used to satisfy differential privacy (DP). A requisite condition for a sanitized statistical summary to satisfy DP is that the corresponding privacy mechanism must induce equivalent probability measures for all possible input databases. We show that elliptical distributions with the same dispersion operator, $C$, are equivalent if the difference of their means lies in the Cameron-Martin space of $C$. In the case of releasing finite-dimensional summaries using elliptical perturbations, we show that the privacy parameter $\ep$ can be computed in terms of a one-dimensional maximization problem. We apply this result to consider multivariate Laplace, $t$, Gaussian, and $K$-norm noise. Surprisingly, we show that the multivariate Laplace noise does not achieve $\ep$-DP in any dimension greater than one. Finally, we show that when the dimension of the space is infinite, no elliptical distribution can be used to give $\ep$-DP; only $(\epsilon,\delta)$-DP is possible.
\end{abstract}

\section{Introduction}\label{s:intro}

Infinite dimensional objects and parameters arise commonly in nonparametric statistics, shape analysis, and functional data analysis.  Several recent works have made strides towards extending tools for differential privacy (DP) to handle such settings.  Some of the first results in this area were given in \cite{hall:rinaldo:wasserman:2013}, with a particular emphasis on Gaussian perturbations and point-wise releases of {statistical summaries represented as} univariate functions.  This work was extended to more general Banach and Hilbert space based summaries by \cite{mirshani2017existence}, which included protections for {public releases based on} path level summaries, nonlinear transformations of {functional summaries}, and full function releases as well.   
However, Gaussian perturbations are not always satisfactory since they cannot be used to achieve pure DP ($\ep$-DP), which requires heavier tailed distributions.  Rather, for pure DP, the most popular distribution is the Laplace mechanism, whose tails are ``just right" for achieving DP in finite dimensional summaries \citep{Dwork2006:Sensitivity}. 

When one moves from univariate to multivariate settings, generalizing the Laplace mechanism is not as simple as generalizing the Gaussian. Often, when the Laplace mechanism is used in multivariate settings, iid Laplace random variables are used. However, this approach fails to capture the multivariate dependence structure of the data or parameter of interest. 
Furthermore, in infinite dimensional settings, adding iid noise is usually not an option if one wishes to remain in a particular function space. To address these issues, we study the use of elliptical distributions to satisfy DP, which allow for a dispersion operator and are closely related to Gaussian distributions. 
{  Elliptical processes offer a nice option for designing DP mechanisms for multivariate and infinite-dimensional data as they allow for the customization of both the tail behavior and dependence structure, which can be tailored to the problem at hand. Recently \citet{bun2019average} explored several alternative univariate distributions for achieving privacy such as Cauchy, Student's T, Laplace Log-Normal, Uniform  Log-Normal, and Arsinh-Normal, which can be extended to elliptical distributions. }

We are interested in releasing a sanitized version of a statistic $T:\mscr D \rightarrow \mathbb X$, where $\mscr D$ is a metric space, representing the space of possible input databases, $D$, and $\mathbb X$ is a locally convex vector space. To achieve differential privacy, we will release $\widetilde T = T(D) + \sigma X$, where $\sigma$ is a positive scalar, and $X$ is a random element of $\mathbb X$. In particular,  we consider $X$ which are drawn from elliptical distributions, of which the multivariate Laplace and Gaussian distributions are special cases.  
Most linear spaces are locally convex vector spaces, including all Hilbert Spaces, Banach Spaces, Frechet spaces, and product spaces of normed vector spaces, meaning that our results will hold quite broadly.

We consider the setting where the statistical summary and privacy mechanism are {\it truly infinite dimensional}, meaning that the  problem cannot be embedded into a finite dimensional subspace where multivariate privacy tools can be used.  There are both interesting mathematical and practical motivations for this perspective.  First, our setting can be viewed as a limit of multivariate problems; if one has privacy over the full infinite dimensional space, then this ensures that the noise is well behaved when releasing multivariate summaries, regardless of how many are released.  Second, one does not need to ensure that every database uses the same finite dimensional subspace, allowing practitioners to use whatever methods and summaries they prefer.  And third, our setting is very convenient when addressing multiple queries.  In particular, one does not need to spend a fraction of the privacy budget for every query.  Instead, the amount spent for each subsequent query decreases dramatically, eventually leveling out to a maximum $\ep$ or $(\ep,\de)$.  To accomplish this, one does not need to ``store'' the infinite dimensional noise, instead, we can generate as much of the noise as is needed for a particular query while conditioning on any noise values generated for prior queries.

We also provide a surprising result showing that $\epsilon$-DP can only be achieved for a finite number of summaries or point-wise evaluations; in infinite dimensions no elliptical perturbation is capable of achieving $\epsilon$-DP over the full function space, one can only achieve $(\epsilon, \delta)$-DP.  This is in stark contrast with what is known from the univariate or multivariate literature on DP.

While elliptical distributions are being used more frequently in statistics and machine learning (e.g. \citealp{schmidt2003credit,frahm2003elliptical,Soloveychik2014,couillet2016second,Sun2016,Goes2017,Ollila2019}), some fundamental questions regarding elliptical distributions in function spaces remain underdeveloped.  For data privacy, the question of equivalence/orthogonality of elliptical measures is particularly important. 
In terms of data privacy, if a perturbation in a dataset produces a private summary that is orthogonal (in a probabilistic sense) to the old one, then the summaries cannot be differentialy private since they can be distinguished with probability one. We show that several conditions for making this determination transfer nicely from the Gaussian setting, but not all.  While conditions on the location function remain the same, conditions on the dispersion function change.  Furthermore, that all elliptical measures are equivalent or orthogonal need no longer hold without additional assumptions.  Regardless, for the purposes of privacy, determining equivalence/orthogonality based on the location is the primary requirement.  

{\bf Related Work:} Our general approach of adding noise from a data-independent distribution to a summary statistic is one of the simplest and most common methods of achieving DP. This approach was first developed using the Laplace mechanism \citep{Dwork2006:Sensitivity}, and has since been expanded to include a larger variety of distributions. \citet{Ghosh2009} showed that when the data is a count, then the optimal noise adding distribution is discrete Laplace. \citet{Geng2016} extended this result to the continuous setting, developing the staircase distribution which is closely related to discrete Laplace. In the multivariate setting, the most common solution is to add iid Laplace noise to each coordinate \citep{Dwork2014:AFD}. However, \citet{Hardt2010} and \citet{Awan2019:Structure} demonstrate that capturing the covariance structure in the data, via the $K$-norm mechanism can substantially reduce the amount of noise required. 

After adding noise to summary statistics, researchers have shown that many complex statistical and machine learning tasks can be produced by post-processing, such as linear regression \citep{Zhang2012}, maximum likelihood estimation \citep{Karwa2016:Inference}, hypothesis testing \citep{Vu2009,gaboardi2016differentially,Awan2018:Binomial}, posterior inference \citep{Williams2010,Karwa2015:PrivatePD}, or general asymptotic analysis \citep{Wang2018}.

To date, the only additive mechanism in infinite-dimensions is the Gaussian mechanism, developed by \citet{hall:rinaldo:wasserman:2013} and \citet{mirshani2017existence}. However, there has been other work on developing privacy tools for these spaces. \citet{awan2019benefits} show that the exponential mechanism \citep{McSherry2007} can be used in arbitrary Hilbert spaces, by integrating with respect to a fixed probability measure such as a Gaussian process. An alternative approach proposed by \citet{Alda2017} uses Bernstein polynomial approximations to release private functions. Recently, \citet{Smith2018} utilized the techniques of \citet{hall:rinaldo:wasserman:2013} to develop private Gaussian process regression. Similar to the pufferfish approach \citep{Kifer2014:Pufferfish}, they assume that the predictors are public, and use the known covariance structure to tailor the noise distribution. 

{\bf Organization:} 
In Section \ref{s:elliptical}, we review the necessary background on locally convex vector spaces, elliptical distributions, and differential privacy. In Section \ref{s:equivalence}, we study the equivalence and orthogonality of elliptical measures, and give a condition that ensures that two elliptical measures are equivalent. In Section \ref{s:dp}, we investigate using elliptical perturbations to achieve DP. First, we consider the finite dimensional case in Section \ref{s:finite}, and in Theorem \ref{thm:MvtElliptical} we give a condition for elliptical perturbations to satisfy $\ep$-DP as well as a method of computing $\ep$. In Section \ref{s:infinite} we show that if the dimension of the space is infinite, then no elliptical perturbation can satisfy $\ep$-DP. In fact, we show that every elliptical distribution can only achieve $(\ep,\de)$-DP for a positive $\de$. We give short proof sketches throughout the document, with detailed proofs left to Section \ref{s:proofs}.

 
\section{Elliptical Distributions}\label{s:elliptical}

Elliptical distributions, whether over $\mbR^d$ or a more general vector space can be defined in a variety of equivalent ways.  Intuitively, an elliptical distribution is one in which its density contours form hyperellipses.  However, this presupposes that the measure is absolutely continuous with respect to Lebesgue measure.  Thus, it is often useful in multivariate settings to use alternative definitions that are more easy to generalize, but which are equivalent to the shape of the density contours when they exist.  This is not unique to elliptical measures, such alternative definitions are often useful when working with infinite dimensional objects \citep[e.g.][]{bosq2000linear}.
Throughout, we focus our attention on an arbitrary, real, locally convex vector space (from here on LCS), $\mbX$, but we will restrict ourselves to simpler spaces (e.g. Banach, Hilbert, or Euclidean spaces) as needed or for illustration.
For ease of reference, recall the following concepts from functional analysis (see \citet{Rudin1991} for an introduction).  
\begin{itemize}
\item A set, $\mbX$, is called a {\it vector space} if it is closed under addition and scalar multiplication (and those operations are well defined).
\item A vector space, $\mbX$, is called a {\it topological vector space}, if it is equipped with a topology under which addition and scalar multiplication are continuous. 
\item A topological vector space $\mbX$ is called {\it locally convex} if its topology is generated by a separated family of semi-norms, $\{p_\alpha:\alpha \in \mcI\}$, where $\mcI$ is an arbitrary index set and separated means that for all nonzero $x\in \mbX$ there exists $\alpha \in \mcI$ such that $p_\alpha(x) \neq 0$.  A base for the topology is given by sets of the form $A_{\alpha,\epsilon} = \{x \in \mbX: p_\alpha(x) < \epsilon\}$.
\item The topological dual, $\mbX^*$, is the collection of all continuous linear functionals on $\mbX$.
\end{itemize}
The assumption that the seminorms are separated is not always included in the definition, but is equivalent to assuming that the space is Hausdorff.  Recall that a topology defines the open sets, a collection of subsets that is closed under uncountable unions, finite intersections, and contains both $\mbX$ and $\emptyset$.  We use this level of generality to include as many settings as possible into our framework.  In particular, all finite dimensional Euclidean spaces, normed vector spaces, Hilbert Spaces, Banach Spaces, and Frechet spaces are types of LCS.  In addition, uncountable product spaces of normed spaces, which are often used in the mathematical foundations of stochastic processes, are LCS as well (when equipped with the product topology).  To find practical examples of spaces that are not locally convex spaces, one either has to consider nonlinear spaces, such as manifolds, or equip a space with an ``odd" metric (such as $L^p$ for $p<1$).

{ 
\begin{example}[LCS Examples]
The definition of LCS in terms of seminorms is perhaps unintuitive at first, 
but can be motivated by product spaces such as $\RR^{[0,1]} = \{f:[0,1]\rightarrow \RR\}$. The space $\RR^{[0,1]}$ is in a sense ``too large'' to accommodate a norm, but it is easy to define a family of semi-norms that measure the magnitude of the function coordinate-wise. Here $\alpha\in \mcI :=[0,1]$ and we define $p_\alpha(f) = |f(\alpha)|$. Note that for any particular $\alpha$, $p_\alpha$ is not a norm, since $p_\alpha(f - g)=0$ does not imply that $f=g$. However, the entire collection of semi-norms \textit{separates points} since $p_\alpha(f - g)=0$ for all $\alpha\in [0,1]$ implies that $f=g$. 

It is easy to see that any normed space fits the definition of LCS. For $C[0,1]$, we set $\mcI=\{1\}$ and define $p_1(f) = \sup_{t\in [0,1]} |f(t)|$, and for $L^2[0,1]$ we set $\mcI=\{1\}$ and define $p_1(f) = \int f^2(t)dt$.
\end{example}}

When working with a LCS one commonly uses one of two $\sigma$-algebras.  The Borel $\sigma$-algebra, $\mcB$, is the smallest $\sigma$-algebra that contains the open sets.  The cylinder $\sigma$-algebra, $\mcC$, is the smallest $\sigma$-algebra that makes all continuous linear functionals measurable.  In general we have $\mcC \subseteq \mcB$, but these two $\sigma$-algebras are not equal unless the space has additional structure,  
e.g. separable Banach spaces. This creates complications in infinite dimensional settings.  For example, the technical theory for stochastic processes often starts with product spaces such as $\mathbb{R}^{[0,1]}$.  There, the two sigma algebras are not the same, 
which is an issue for privacy as one desires privacy over $\mcB$, not just $\mcC$. 
{  This is because only the events in the chosen $\sigma$-algebra are protected, and a larger $\sigma$-algebra offers a stronger privacy guarantee. More importantly, $\mcC$ does not contain most sets of interest, including continuous functions, linear functions, polynomials, constant functions, etc. \citep[Problem 36.6]{Billingsley1979}. }
To overcome this challenge, \citet{mirshani2017existence} used Cameron-Martin theory, to obtain DP over all of $\mcB$ through careful use of densities in infinite dimensional spaces. 
This theory is built upon Gaussian processes; however, we will show that several of their key results, especially those needed for privacy, extend directly to elliptical distributions.  Throughout this paper, we assume $\mbX$ is equipped with its Borel $\sigma$-algebra when discussing measures, measurability, and DP.

Often it is convenient to define probability measures over abstract spaces in terms of their characteristic functionals (i.e. Fourier transforms), which uniquely determine measures in any LCS.

\begin{definition} [\citealp{fang2017symmetric}]\label{d:ellip}
A measure, $P$, over a locally convex space $\mbX$ is called elliptical if and only if its 
characteristic functional, 
$\widetilde P:\mbX^* \to \mathbb C$, has the form
\[
\widetilde P (g) 
= \int_{\mbX} \exp\{ i g(x)\} \ dP(x) 
= e^{i g(\mu)} \phi_0 (C(g,g)), 
\]
where $\mu\in \mbX$, 
$C$ is a symmetric, positive definite, 
continuous bilinear form on $\mbX^* \times \mbX^*$, and $\phi_0$ is a positive function on $\mbR$, which is continuous at $0$ and satisfies $\phi_0(0)=1$.
\end{definition}
Definition \ref{d:ellip} implies that the distribution of $P$ is uniquely determined by knowing $\mu$, $C$, and $\phi_0$.  The object $\mu$ denotes the center of the distribution; we will say a distribution is centered if $\mu=0$.  The object $C$ is often called the covariance or dispersion operator.  In general, $C$ can either be identified as an operator or bilinear form (in fact a $(0,2)$ tensor).  We avoid introducing extra notation we will let $C(g)$ denote the operator version and $C(f,g)$ the bilinear form.  


We begin by presenting a second characterization of elliptical measures, which is due to \citet{kingman1972random}, and is stated more explicitly in \citet{boente2014characterization}. 
\begin{theorem}[\citealp{boente2014characterization}]\label{c:gauss}
Let $X \in \mbX$ be an elliptically distributed random variable such that $C$ does not have finite rank.  Then there exists a mean zero Gaussian process $Z \in \mbX$ with covariance operator $C$, an element $\mu \in \mbX$, 
and a strictly positive random variable $V \in \mbR^+$, that is independent of $Z$, and satisfies
$X \overset{\mcL}{=} \mu + VZ.$
\end{theorem}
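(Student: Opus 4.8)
The plan is to establish the representation $X \overset{\mcL}{=} \mu + VZ$ by working entirely at the level of characteristic functionals, exploiting Definition \ref{d:ellip} together with the classical scale-mixture characterization of positive definite functions of the form $\phi_0(t^2)$. First I would reduce to the centered case: writing $X = \mu + (X-\mu)$, the random element $X-\mu$ has characteristic functional $g \mapsto \phi_0(C(g,g))$, so it suffices to produce a mean-zero Gaussian $Z$ with covariance operator $C$ and an independent strictly positive $V$ with $X - \mu \overset{\mcL}{=} VZ$.

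The key analytic input is Schoenberg's theorem: a function $\phi_0$ is positive definite on $\mbR^d$ for \emph{every} $d$ (equivalently, $g \mapsto \phi_0(C(g,g))$ is a valid characteristic functional for the infinite-dimensional $C$ arising here) and continuous with $\phi_0(0)=1$ if and only if there is a probability measure $\nu$ on $[0,\infty)$ such that
\[
\phi_0(t) = \int_0^\infty e^{-t s^2/2}\, d\nu(s).
\]
Since Definition \ref{d:ellip} requires $\phi_0$ to be positive definite on all of $\mbR$ and the bilinear form $C$ is only assumed continuous (so that $g\mapsto C(g,g)$ can realize arbitrarily high-dimensional Gram matrices), this mixing representation is available. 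I would then set $V^2 = 1/W$ where $W$ has law $\nu$ conditioned to be positive — the strict positivity of $V$ corresponds to $\nu(\{0\})$ being handled so that $VZ$ is genuinely a positive scale times a Gaussian; if $\nu(\{0\})>0$ one absorbs that atom into a degenerate $Z$-component, but under the stated hypothesis that $V$ is strictly positive we simply take $\nu$ with no atom at $0$ after the obvious normalization. Now let $Z$ be a mean-zero Gaussian process on $\mbX$ with covariance operator $C$ (its existence with characteristic functional $g \mapsto e^{-C(g,g)/2}$ is exactly the classical Gaussian analogue, and $C$ being symmetric, positive definite, and continuous is precisely what is needed). Taking $V$ independent of $Z$ and conditioning on $V$,
\[
\E\!\left[\exp\{i g(VZ)\}\right] = \E\!\left[\,\E\!\left[\exp\{i (Vg)(Z)\}\mid V\right]\right] = \E\!\left[e^{-V^2 C(g,g)/2}\right] = \int_0^\infty e^{-v^2 C(g,g)/2}\, d\mu_V(v) = \phi_0(C(g,g)),
\]
which matches the characteristic functional of $X-\mu$. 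Since characteristic functionals determine measures on an LCS (as noted before Definition \ref{d:ellip}), this gives $X - \mu \overset{\mcL}{=} VZ$, hence $X \overset{\mcL}{=} \mu + VZ$, completing the argument.

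The main obstacle is justifying the scale-mixture representation of $\phi_0$ at the required level of generality: Schoenberg's theorem in its usual form is about radial positive definite functions on $\mbR^d$ or Hilbert space, whereas here $\phi_0$ is composed with an abstract continuous bilinear form on $\mbX^* \times \mbX^*$. The resolution is that the hypothesis "$\phi_0$ is positive definite on $\mbR$ and $\widetilde P$ is a genuine characteristic functional" forces, via finite-dimensional marginals of $P$ (push $X$ forward through finitely many functionals $g_1,\dots,g_d \in \mbX^*$, which yields a finite-dimensional elliptical law whose characteristic function is $\phi_0$ evaluated on the associated Gram form), that $\phi_0(\|\cdot\|^2)$ is positive definite on every $\mbR^d$; letting $d \to \infty$ and applying the Schoenberg–Kolmogorov characterization yields the mixing measure. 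One must also check that the Gaussian $Z$ with covariance operator $C$ actually exists as a Radon random element of $\mbX$ — this is inherited from the same consistency that makes $P$ itself a well-defined measure, since $P$'s finite-dimensional projections being elliptical with dispersion $C$ forces the corresponding Gaussian projections to be consistent. A secondary technical point, needed only for the word "strictly" in "strictly positive random variable," is ruling out (or separating off) an atom of the mixing measure at zero; this is immediate once one assumes, as the statement does, that such a $V$ exists, and amounts to a normalization.
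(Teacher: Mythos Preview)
The paper does not actually give a proof of this theorem; it states that the result is well known, cites Fang et al.\ for the multivariate case, and asserts that ``the proof is the same for general LCS.'' So there is no in-paper argument to compare against.

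That said, your approach via Schoenberg's theorem is the standard route and is essentially correct. The reduction to the centered case, the passage to finite-dimensional marginals to force $\phi_0(\|\cdot\|^2)$ to be positive definite on every $\mbR^d$, and the invocation of Schoenberg to obtain the Gaussian scale-mixture representation of $\phi_0$ is exactly how the argument runs in the cited reference, lifted to the LCS setting by matching characteristic functionals (which, as the paper notes, determine measures on an LCS). Two remarks. First, the line ``set $V^2 = 1/W$ where $W$ has law $\nu$'' is a slip: with your parametrization $\phi_0(t) = \int_0^\infty e^{-ts^2/2}\,d\nu(s)$ you want $V$ itself to have law $\nu$, and indeed your final displayed computation uses exactly that. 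Second, your identification of Schoenberg as the nontrivial step is apt: in genuinely finite dimensions not every elliptical law is a Gaussian scale mixture (the uniform distribution on a Euclidean ball is the standard counterexample), so the statement as written implicitly leans on $C$ not having finite rank, so that the marginals of $X$ realize elliptical laws in \emph{every} dimension $d$ and thereby force the Schoenberg hypothesis. You handle this correctly in your discussion of the main obstacle.
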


This result is often phrased as ``every elliptical distribution is a scalar mixture of Gaussian processes."  While it is, of course, a fascinating result in its own right, it also provides a simple method of generating and simulating from arbitrary elliptical distributions.    

Due to this corollary, we will index every elliptical measure using $\mu$, $C$, and the mixing distribution of $V$, which we will denote as $\psi$, and use the notation $\mcE(\mu,C,\psi)$.  Equivalently, we could index using the $\phi_0$ from Definition \ref{d:ellip}, but our results in Section \ref{s:equivalence} are easier to present in terms of $\psi$.

We conclude by stating a general definition of DP, which makes sense over any measurable space, though we state it here for LCS. The concept of differential privacy was first introduced in \citet{Dwork2006:Sensitivity} and \citet{Dwork2006:DP}. Over time researchers have worked to make the definition more precise and flexible, such as \citet{wasserman:zhou:jasa09} who state it in terms of conditional distributions. For a general, axiomatic treatment of formal privacy, see \citet{Kifer2012}.

\begin{definition}
Let $(\mcD,d)$ be a metric space and $\{P_D:D \in \mcD\}$  be a family of probability measures over a locally convex topological vector space $\mbX$.  We say the family achieves $(\epsilon,\delta)$-differential privacy if for any $d(D,D') \leq 1$ and any measurable $A$, we have
\begin{equation}\label{eq:DP}
P_D(A) \leq e^\epsilon P_{D'}(A) + \delta.
\end{equation}
\end{definition}
Intuitively, $\mcD$ represents the universe of possible input databases.  One then refers to $\{P_D:D \in \mcD\}$ as the {\it privacy mechanism}. The most common setting when discussing DP is when $\mcD$ is a product space and the metric is the Hamming distance.  However, the Hamming distance (which counts differences in coordinates) is insensitive to the magnitude of the difference between two inputs $D$ and $D'$, thus one may wish to consider alternatives and so we take the more general approach. {  As discussed earlier, while DP can be defined with any $\sigma$-algebra, we assume that $\mbX$ is equipped with the Borel $\sigma$-algebra as it offers more intuitive guarantees. We refer to $(\ep,\de)$-DP as \emph{approximate DP} when $\de>0$ and as \emph{pure-DP} when $\de=0$. When using pure DP, we often just write $\ep$-DP.}

Another way of viewing $\epsilon$-DP (that is, taking $\delta = 0$) is through the equivalence/orthogonality of probability measures.  As was discussed in \citet{awan2019benefits}, in an $\ep$-DP mechanism the individual measures that make up the mechanism are all equivalent in a probabilistic sense (meaning they agree on the zero sets).  Conversely, if the measures are orthogonal then the mechanism cannot even be ($\epsilon,\delta$)-DP.  
This perspective was used in \citet{mirshani2017existence} for the case of Gaussian mechanisms.  However, the corresponding theory for elliptical distributions is less developed.  In the next section we extend several fundamental results of Gaussian processes to elliptical distributions.

\section{Equivalence and Orthogonality of Elliptical Measures}\label{s:equivalence}

A classic result from probability theory is that any two Gaussian processes are either equivalent or orthogonal (that is, as probability measures they either agree on the zero sets or concentrate their mass on disjoint sets).  Recall that by the Radon-Nikodym theorem, if two measures are equivalent then there exists a density of one with respect to the other (and vice versa).  What we will now show is that this property, to a degree, extends to any elliptical family.  Furthermore, we will show that the conditions for establishing this equivalence/orthogonality are nearly the same as for Gaussian processes.  We begin with a fairly simple yet surprisingly useful technical lemma.

\begin{lemma}\label{l:equiv}
Let $(\Omega, \mcF, P)$ be a probability space.  Let $X^{1}: \Omega \to \mbX$ and $X^{2}: \Omega \to \mbX$ denote two random elements of $\mbX$, and let $T^{1}: \Omega \to \mbR$  and $T^{2}: \Omega \to \mbR$ be two random variables.  Let $P^i$ denote the probability measure over $\mbX$ induced by $X^i$ and let $Q^i$ denote the measure over $\mbR$ induced by $T^i$.  Let $P^{i}_t$ denote the conditional measure of $X^{i}$ given $T^i = t$.  If $Q^1$ and $Q^2$ are equivalent and $P^{1}_t$ and $P^{2}_t$ are equivalent for almost all $t$ (wrt $Q^1$) then so are $P^{1}$ and $P^{2}$.  
\end{lemma}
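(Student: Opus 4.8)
The plan is to prove equivalence by a null-set chase, exploiting the fact that each $P^i$ disintegrates over $Q^i$ through the conditional kernels $P^i_t$. Concretely, by the defining property of a (regular) conditional distribution, for every measurable $A \subseteq \mbX$ the map $t \mapsto P^i_t(A)$ is measurable and
\[
P^i(A) = \int_{\mbR} P^i_t(A)\, dQ^i(t), \qquad i = 1, 2.
\]
Since "equivalent" for two probability measures means precisely "same null sets," it is enough to show that $P^1(A) = 0$ implies $P^2(A) = 0$ for every measurable $A$, and then appeal to symmetry.

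First I would fix a measurable $A$ with $P^1(A) = 0$. Because the integrand in the disintegration of $P^1$ is nonnegative, $P^1_t(A) = 0$ for $t$ outside a $Q^1$-null set $M$. Let $N$ be the $Q^1$-null set off which $P^1_t$ and $P^2_t$ are equivalent. Then for every $t \notin M \cup N$ we have $P^1_t(A) = 0$ together with $P^1_t \sim P^2_t$, hence $P^2_t(A) = 0$. Now invoke $Q^1 \sim Q^2$: the set $M \cup N$ is $Q^1$-null, therefore also $Q^2$-null, so $P^2_t(A) = 0$ for $Q^2$-almost every $t$. Plugging this into the disintegration of $P^2$ gives $P^2(A) = \int_{\mbR} P^2_t(A)\, dQ^2(t) = 0$. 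The reverse implication is obtained by the same argument with the superscripts $1 \leftrightarrow 2$ interchanged: the exceptional set $N$ is $Q^1$-null hence $Q^2$-null, so "$P^2_t \sim P^1_t$ for $Q^2$-a.e.\ $t$" holds, and $Q^2 \sim Q^1$ supplies the final step. Combining the two implications shows $P^1$ and $P^2$ have exactly the same null sets, i.e.\ they are equivalent.

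The only real subtlety — and the step I would treat most carefully — is the disintegration identity above together with the measurability of $t \mapsto P^i_t(A)$, since for a general LCS equipped with its Borel $\sigma$-algebra regular conditional distributions need not exist. Here, though, this is built into the hypothesis: the statement posits the conditional measures $P^i_t$, and the identity $P^i(A) = \int_{\mbR} P^i_t(A)\, dQ^i(t)$ is just the case $B = \mbR$ of the defining relation $P(X^i \in A,\, T^i \in B) = \int_B P^i_t(A)\, dQ^i(t)$ of such a conditional distribution. So in the write-up I would simply invoke this defining relation rather than construct the kernels. As an aside, one could instead try to build $dP^2/dP^1$ directly from $dQ^2/dQ^1$ and the fiberwise densities $dP^2_t/dP^1_t$, but that route forces one to argue joint measurability of $(x,t) \mapsto dP^2_t/dP^1_t(x)$, which is more delicate and, since we only need mere equivalence, unnecessary.
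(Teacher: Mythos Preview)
Your proof is correct and follows essentially the same argument as the paper: disintegrate $P^i(A)$ as $\int P^i_t(A)\,dQ^i(t)$, use that a vanishing integral of a nonnegative function forces the integrand to vanish $Q^1$-a.e., transfer this to $P^2_t$ and $Q^2$ via the assumed equivalences, and then swap the roles of $1$ and $2$. Your write-up is somewhat more explicit about naming the exceptional sets and about the measurability underlying the disintegration, but the route is the same.
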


The proof of Lemma \ref{l:equiv} is in Section \ref{s:proofs}.  {Implicit in Lemma \ref{l:equiv} is that the conditional distributions exist.  This is not an issue in our setting as the conditional distributions can be explicitly constructed for elliptical processes, however, for general processes and spaces one can encounter nontrivial technical problems.  We refer the interested reader to \citet{hoffmann1972existence}, \citet[THM A.3.11]{bogachev1998gaussian}, and \citet[Chapter 6]{kallenberg2006foundations} for further discussion.}  Interestingly, the reverse statement is not true.  That is, even if all of the conditional distributions are orthogonal, the unconditional measures need not be orthogonal.  To see this, suppose that $T$ is 0 or 1 with equal probability.  Now, assume that $P^{1}_0$ and $P^{1}_1$ are orthogonal and set $P^{2}_0 = P^{1}_1$ and  $P^{2}_1 = P^{1}_0$.  Clearly the conditional distributions are orthogonal, but not only are the unconditional measures equivalent, they are actually the same!  

Regardless, our goal is more specific; we want to establish conditions under which $\mcE(\mu_1, C, \psi)$ and $\mcE(\mu_2,C,\psi)$ are orthogonal when they share the same $\psi$ and $C$.  In terms of DP, $\mu_1$ and $\mu_2$ represent the private summary from two different databases.  If $\psi$ is a point mass, then the two measures are Gaussian and the conditions are known.  The question is, to what degree do such conditions extend to other mixtures?  Theorem \ref{thm:equiv} shows that the same conditions for Gaussian processes (with the same covariance, but different means) apply to any elliptical family.  Given Corollary \ref{c:gauss}, this may seem obvious, but Lemma \ref{l:equiv} implies that the matter is surprisingly delicate.  For example, two Gaussian processes with the same mean, but where one has a covariance equal to a scalar $c\neq 1$ multiple of the other, are actually orthogonal (in infinite dimensions).  This need not hold for arbitrary elliptical families as the scalar can be absorbed by the mixing coefficient (and then apply Lemma \ref{l:equiv}).

Our first major result establishes a condition under which DP cannot be achieved, regardless of the magnitude of the noise.  First, let us define a subspace of $\mbX$ using the bilinear form $C$ (more detail can be found in \citet{bogachev1998gaussian,mirshani2017existence}).    In particular, $C$ induces an inner product $\langle \cdot,\cdot\rangle_{\mcK}$ on the dual space $\mbX^*$ given by $\langle f ,g \rangle_{\mcK}: = C(f,g) = \int f(x) g(x) dP(x)$, where $P$ is a Gaussian measure with mean zero and covariance $C$.  Then, we can view $\mbX^*$ as a subspace of $L^2(\mbX,P)$, the space of $P$-square integrable functions from $\mbX \to \mbR$.  By assumption, $\langle \cdot, \cdot \rangle_\mcK$ is a continuous, symmetric, and positive definite bilinear form and thus a valid inner product.  However, $\mbX^*$ is not complete with respect to this inner product when $\mbX$ is infinite dimensional, so let $\mcK$ denote its completion.  Finally, consider the subset $\mbH \subset \mbX$, such that for $h \in \mbH$ the operation $T_h: \mcK \to \mbR$ given by $T_h(g) := g(h)$ is continuous in the $\mcK$ topology.  Then $\mbH$ is called the {\it Cameron-Martin space} of $C$ (or equivalently, of the mean zero Gaussian process with $C$ as its covariance).  Intuitively, the functionals in $\mcK$ are much ``rougher" than those in $\mbX^*$ and thus the elements of $\mbH$ are much more regular than general elements of $\mbX$ to counter balance this. 
In fact, $C$ also generates an operator from $\mcK \to \mbH$ denoted as $C(g) = \int x g(x) dP(x)$. Using this notation, an element $h \in \mbX$ lies in $\mbH$ exactly when it equals $h = C(g)$ for some $g \in \mcK$. The space $\mbH$ is also a Hilbert space (even though $\mbX$ need not be) equipped with the inner product $\langle h_1 ,h_2 \rangle_{\mbH} = \langle g_1, g_2 \rangle_{\mcK}$ where $h_i = C(g_i)$.   
\begin{theorem} \label{thm:equiv}
Let $P_1 \sim \mcE(\mu_1, C, \psi)$ and $P_2 \sim \mcE(\mu_2, C, \psi)$ be two elliptical measures over a locally convex topological vector space, $\mbX$.  Then the two distributions are equivalent if $\mu_1 - \mu_2$ resides in the Cameron-Martin space of $C$ and orthogonal otherwise.
\end{theorem}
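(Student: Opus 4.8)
The plan is to reduce the equivalence/orthogonality question for the elliptical pair $\mcE(\mu_1,C,\psi)$ and $\mcE(\mu_2,C,\psi)$ to the classical Cameron–Martin dichotomy for the underlying Gaussian process, using the scalar-mixture representation of Theorem \ref{c:gauss} together with the conditioning Lemma \ref{l:equiv}. Write $P_i \overset{\mcL}{=} \mu_i + V Z$ where $Z$ is the mean-zero Gaussian process with covariance $C$ and $V \sim \psi$ is a strictly positive scalar independent of $Z$; the two representations share the \emph{same} $Z$ and the \emph{same} $V$, differing only in the shift $\mu_i$. I will take the auxiliary randomizing variable in Lemma \ref{l:equiv} to be $T^i = V$ itself, so that $Q^1 = Q^2 = \psi$ (trivially equivalent), and the conditional law $P^i_t$ of $\mu_i + VZ$ given $V=t$ is exactly the Gaussian measure $\mcN(\mu_i, t^2 C)$, i.e.\ a shifted, rescaled copy of the law of $Z$.

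First I would treat the equivalence direction. Suppose $h := \mu_1 - \mu_2 \in \mbH$, the Cameron–Martin space of $C$. For each fixed $t > 0$, the Cameron–Martin space of $t^2 C$ is the same set $\mbH$ (rescaling the covariance by a positive scalar does not change which elements are ``admissible shifts'' — it only rescales the $\mbH$-norm), so $h \in \mbH$ is an admissible shift for the Gaussian measure $\mcN(0, t^2 C)$, and hence $\mcN(\mu_1, t^2 C)$ and $\mcN(\mu_2, t^2 C)$ are equivalent by the classical Cameron–Martin theorem (see \citet{bogachev1998gaussian}). Thus $P^1_t$ and $P^2_t$ are equivalent for every $t$, a fortiori for $\psi$-almost every $t$, and Lemma \ref{l:equiv} gives equivalence of $P_1$ and $P_2$.

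For the orthogonality direction, suppose $h = \mu_1 - \mu_2 \notin \mbH$. Now I cannot simply invoke the reverse of Lemma \ref{l:equiv}, since (as the excerpt stresses) conditional orthogonality does not in general propagate upward. Instead I would argue directly. Since $h \notin \mbH$, for each fixed $t$ the Gaussian measures $\mcN(0,t^2C)$ and $\mcN(0,t^2C)$ shifted by $h$ are mutually singular; concretely there is a measurable ``separating'' statistic — built from a sequence of functionals $g_n \in \mbX^*$ along which $C(g_n,g_n)$ is controlled but $g_n(h) \to \infty$ — whose value is, $\mcN(\mu_1,t^2C)$-almost surely, $+\infty$ and, $\mcN(\mu_2,t^2C)$-almost surely, $-\infty$ (the standard construction underlying the Feldman–Hájek dichotomy). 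The key point is that this divergence is driven by the \emph{location} shift $h$, not by the scale: the partial sums $\sum_n g_n(\mu_i + tZ)/a_n$ (for suitable normalizing constants $a_n$) have a deterministic drift term $\sum_n g_n(\mu_i)/a_n$ that diverges to $\pm\infty$ depending only on $i$, while the stochastic part $\sum_n t\, g_n(Z)/a_n$ can be arranged to converge $P$-a.s.\ (for every $t$ simultaneously, since $t$ is just a bounded-away-from-zero multiplicative constant and the a.s.\ convergence of a Gaussian series is scale-invariant). Therefore the event ``this statistic equals $+\infty$'' has $P_1$-probability one and $P_2$-probability zero, exhibiting a set on which the two unconditional measures disagree completely, i.e.\ $P_1 \perp P_2$.

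The main obstacle I anticipate is making the orthogonality argument uniform in $t$ and valid in the full generality of a locally convex space. One must choose a single separating sequence $\{g_n\} \subset \mbX^*$ that works for \emph{all} scales $t$ in the (essential) support of $\psi$ at once — this is where the $t$-independence of the Cameron–Martin space of $t^2C$ and the scale-invariance of a.s.\ convergence of the Gaussian series $\sum g_n(Z)/a_n$ are essential — and then one must ensure the resulting ``is the drift $+\infty$?'' event is genuinely Borel measurable in $\mbX$. The cylinder-vs-Borel subtlety flagged in Section \ref{s:elliptical} is exactly the point to be careful about here; as in \citet{mirshani2017existence}, one works with countably many functionals so that the relevant event lies in $\mcC \subseteq \mcB$, and hence is Borel. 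Once the Gaussian dichotomy is phrased in this $t$-uniform, countably-generated way, both directions follow, the equivalence half cleanly from Lemma \ref{l:equiv} and the orthogonality half from the explicit separating event.
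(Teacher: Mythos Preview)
Your equivalence direction is exactly the paper's: condition on $V=t$, observe that the Cameron--Martin space of $t^2C$ coincides with that of $C$ as a set, so the conditional Gaussians are equivalent for every $t>0$, and then invoke Lemma~\ref{l:equiv}. Nothing to add there.

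The orthogonality direction, however, takes a genuinely different route from the paper. You attempt to build a single Borel event of the form ``$\sum_n g_n(\,\cdot\,)/a_n = +\infty$'' that has $P_1$-probability one and $P_2$-probability zero, requiring a careful choice of $g_n$ and $a_n$ so that the Gaussian series converges a.s.\ (uniformly in the random scale $t$) while the deterministic drift diverges. This can be made to work, but two of your side remarks are off: the support of $\psi$ need not be bounded away from zero (though this is harmless, since a.s.\ convergence of $\sum g_n(Z)/a_n$ gives a.s.\ convergence of $t\sum g_n(Z)/a_n$ for \emph{every} $t>0$); and after translating so that $\mu_2=0$, the $P_2$-limit of your statistic is finite, not $-\infty$, so the separating event should be ``the series diverges to $+\infty$'' versus ``the series converges.''

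The paper's argument is more elementary and sidesteps precisely the uniformity-in-$t$ obstacle you flag. Rather than constructing one set with probabilities exactly $0$ and $1$, it uses the weaker (but sufficient) criterion that for every $\epsilon>0$ there exists a Borel set $A$ with $P_1(A)\geq 1-\epsilon$ and $P_2(A)\leq\epsilon$. Since $\mu\notin\mbH$ means $f\mapsto f(\mu)$ is unbounded on $(\mbX^*,\|\cdot\|_\mcK)$, one picks a \emph{single} functional $g_i\in\mbX^*$ with $\|g_i\|_\mcK=1$ and $g_i(\mu)$ arbitrarily large, and takes $A=\{x:g_i(x)\leq c_\epsilon\}$ a half-space. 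The point is that $g_i(X_1)\overset{d}{=}V\cdot N(0,1)$ has a law \emph{independent of $i$}, so $c_\epsilon$ can be chosen once, while $g_i(X_2)\overset{d}{=}g_i(\mu)+V\cdot N(0,1)$ drifts off as $g_i(\mu)\to\infty$. No series, no uniformity issue, and the measurability of $A$ is immediate since $g_i\in\mbX^*$. Your approach buys a single explicit separating set at the cost of more bookkeeping; the paper's buys simplicity at the cost of an extra (but standard) ``$\epsilon$-separation implies singularity'' step.
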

\begin{proof}[Proof Sketch.]
For the first direction, if $\mu_1-\mu_2$ resides in the Cameron-Martin space of $C$ then it resides in the Cameron-Martin space of $v C$ for $v >0$  {since they induce equivalent norms.  From \citet[Theorem 2.4.5 ]{bogachev1998gaussian}, two Gaussian measures with the same covariance, $C$, are equivalent if the difference of their means resides in the Cameron-Martin space of $C$}. Thus, conditioned on the mixture $V=v$, the measures are equivalent for all $v$.  By Lemma \ref{l:equiv}, they are equivalent.

For the reverse direction we consider, without loss of generality, $X_1 \sim \mcE(0, C,\psi)$ versus $X_2 \sim \mcE(\mu, C,\psi)$ where $\mu$ is not in the Cameron-Martin space of $C$. To see that the two measures are orthogonal, it suffices to show that, for any fixed $\epsilon \in (0,1)$ we can construct a measurable set $A$ such that $P(X_1 \in A) \geq 1 - \epsilon$ while $P(X_2 \in A) \leq \epsilon$.
\qedhere
\end{proof}
To interpret Theorem \ref{thm:equiv} in the context of privacy, 
given a database $D \in \mcD$, recall that a private summary is drawn from the elliptical distribution $\mcE(\mu_D,C,\psi)$.  Theorem \ref{thm:equiv} then says that the measures are orthogonal (and thus no amount of noise will produce a DP summary) unless all of the differences $\mu_D - \mu_{D'}$, for any $D, D' \in \mcD$ reside in the Cameron-Martin space of $C$.

\section{Achieving DP with Elliptical Perturbations}\label{s:dp}

Now that we have the necessary tools in place and we know when we cannot have DP, we will now construct a broad class of mechanisms that do achieve DP.  Recall that the mechanisms will be of the form $\widetilde T_D = 
T_D + \sigma X,
$
where $T_D:= T(D)$ is the nonprivate statistical summary, $X$ is a prespecified elliptical process and $\sigma>0$ is a fixed scalar.  The exact value of $\sigma$ will be set to achieve some desired level of privacy.  Gaussian perturbations (i.e. taking $\phi$ as a point mass) will not achieve $\epsilon$-DP even in finite dimensions. 
As is known in the literature, Gaussian perturbations have tails that are too light, causing the probability inequality of DP to fail for sets in the tails. 
To fix this, it is common to use another distribution, often the Laplace distribution, whose tails appear to be just right for achieving DP.  Interestingly, this trick does not carry over to infinite dimensional spaces.  We will show that while some elliptical distributions can achieve $\epsilon$-DP for finite dimensional projections, none can achieve it over the entire infinite-dimensional space; they can only achieve $(\epsilon,\delta)$-DP with $\delta > 0$. 


\subsection{DP in Finite Dimensions}\label{s:finite}
In this subsection, we give a criterion (Theorem \ref{thm:MvtElliptical}) that establishes which elliptical distributions satisfy $\ep$-DP, when 
$\mbX = \mbR^d$.  We also provide a related result (Corollary \ref{cor:proj}) for $\epsilon$-DP with $d$-dimensional projections of infinite dimensional summaries, which holds uniformly across the choice of projection, for a fixed $d$.     
Elliptical distributions that can achieve $\epsilon$-DP 
(with a fixed $d$) include $\ell_2$-mechanism \citep{Chaudhuri2009,Chaudhuri2011:DPERM,Kifer2012,Song2013,Yu2014,Awan2019:Structure}, and the multivariate $t$ distribution. Interestingly, the multivariate Laplace distribution cannot achieve $\ep$-DP when $d \geq 2.$

Denote by $\Sigma = \{C(e_i,e_j)\}$ {  the positive definite matrix containing} the evaluations of $C$ on the standard basis of $\mbR^d$.  Then the density of $\widetilde T_D = T_D + \sigma X$ is proportional to $f(\sigma^{-2} (x-T_D)\Sigma^{-1}(x-T_D))$, where $f$ depends only on the dimension $d$ and the elliptical family for $X$. In the case that $X$ is a mixture of normal distributions, as in Theorem \ref{c:gauss}, then $f$ is a decreasing positive function \citep[Equation 2.48]{fang2017symmetric}. The omitted constants depend on $\Sigma$, but not on $T_D$.  The Cameron-Martin norm can be expressed as $\| g\|_\mcH= g^\top \Sigma^{-1} g$. In fact $\mcH = \mbR^d$, but {  equipped with a different norm.}



\begin{theorem}\label{thm:MvtElliptical}
   Assume that $\mbX = \mbR^d$ and that $\widetilde T_D$ has a density with respect to Lebesgue measure proportional to 
    $\displaystyle f_{\widetilde T_D}(x) \propto f(\sigma^{-2} (x-T_D)^\top \Sigma^{-1} (x-T_D)),$ 
    where $f:[0,\infty) \to [0,\infty]$ is a decreasing positive function.
    Set 
\[\Delta = \sup_{D \sim D'} \lVert T_D - T_{D'} \rVert_\mcH = \sup_{D \sim D'} \lVert \Sigma^{-1/2}(T_D - T_{D'} )\rVert_2.\]
If $\Delta<\infty$, $f(0) < \infty$, and 
\vspace{-.5cm}
\begin{align}
\limsup_{c\rightarrow \infty} \frac{f((c-\Delta)^2)}{f(c^2)}<\infty, \label{eq:limit}
\end{align}
then $\widetilde T_D$ satisfies $\ep$-DP, where \ \ 
$\displaystyle \exp(\ep) = \sup_{c\geq \sigma^{-1}\Delta} \frac{f((c- \sigma^{-2}\Delta)^2)}{f(c^2)} < \infty.$
\end{theorem}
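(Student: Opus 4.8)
The plan is to reduce the $d$-dimensional differential privacy inequality to a one-dimensional ratio bound by exploiting the elliptical (radial) structure of the density. The key observation is that the privacy ratio $f_{\widetilde T_D}(x)/f_{\widetilde T_{D'}}(x)$ depends on $x$ only through the two Mahalanobis-type quantities $\sigma^{-2}(x-T_D)^\top \Sigma^{-1}(x-T_D)$ and $\sigma^{-2}(x-T_{D'})^\top \Sigma^{-1}(x-T_{D'})$. After the change of variables $y = \sigma^{-1}\Sigma^{-1/2}x$, these become $\lVert y - a \rVert_2^2$ and $\lVert y - b \rVert_2^2$ with $\lVert a - b\rVert_2 = \sigma^{-1}\lVert T_D - T_{D'}\rVert_\mcH \le \sigma^{-1}\Delta$, and the normalizing constants cancel because they depend only on $\Sigma$ (and $d$), not on the centers. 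So the worst-case ratio over all $x$ and all neighbors $D\sim D'$ equals the supremum of $f(r_1)/f(r_2)$ over all pairs of points $y, a, b$ in $\mbR^d$ with $\lVert a-b\rVert_2 \le \sigma^{-1}\Delta$, where $r_i$ is the squared distance from $y$ to $a$ resp.\ $b$.

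First I would show that since $f$ is decreasing, for a fixed $y$ the ratio $f(\lVert y-a\rVert^2)/f(\lVert y-b\rVert^2)$ is maximized by making $\lVert y - a\rVert$ as small as possible relative to $\lVert y - b\rVert$; with the constraint $\lVert a - b\rVert \le \sigma^{-1}\Delta$, the extreme configuration is the collinear one where $a$ lies between $y$ and $b$ at distance exactly $\sigma^{-1}\Delta$ from $b$. Writing $c = \lVert y - a\rVert \ge 0$, this gives $\lVert y - b\rVert = c + \sigma^{-1}\Delta$ (when $c>0$; the degenerate cases are handled separately using $f(0)<\infty$), so the ratio equals $f(c^2)/f((c+\sigma^{-1}\Delta)^2)$. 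Reindexing by $c' = c + \sigma^{-1}\Delta \ge \sigma^{-1}\Delta$ turns this into $f((c'-\sigma^{-1}\Delta)^2)/f(c'^2)$. To match the theorem's stated formula with the $\sigma^{-2}\Delta$ shift I would then absorb a factor of $\sigma$ into the integration variable appropriately (the statement as written uses a slightly different scaling of $c$, which is just a relabeling); the content is that $\exp(\ep)$ equals the supremum over the appropriate half-line of $f$ evaluated at a shifted-inward argument divided by $f$ at the un-shifted argument. Second, I would verify this supremum is finite: on any compact subinterval $[\sigma^{-1}\Delta, M]$ it is finite because $f$ is positive and the numerator is bounded by $f(0)<\infty$ while the denominator is bounded below by $f(M^2)>0$ (positivity of $f$ everywhere), and the behavior as $c\to\infty$ is controlled precisely by hypothesis \eqref{eq:limit}. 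Hence the sup is attained-or-approached with a finite value, giving a finite $\ep$.

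Finally, I would assemble the DP inequality: for every measurable $A$, $P_D(A) = \int_A f_{\widetilde T_D} \le e^\ep \int_A f_{\widetilde T_{D'}} = e^\ep P_{D'}(A)$, since the pointwise density ratio is bounded by $e^\ep$ everywhere by the above analysis. This is $\ep$-DP with $\delta = 0$.

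The main obstacle I anticipate is the geometric optimization step — rigorously arguing that the collinear configuration is the extremizer and handling the boundary and degenerate cases ($c=0$, or $y=b$, or $f$ possibly taking the value $+\infty$ only at $0$ which $f(0)<\infty$ rules out, or $f$ hitting $0$). One has to be careful that $f$ decreasing does not immediately give continuity or strict monotonicity, so the "worst case is collinear" claim needs a clean argument: fix the two squared radii $r_1 \le r_2$ that are geometrically realizable given $\lVert a-b\rVert \le \sigma^{-1}\Delta$, observe that the set of realizable $(r_1, r_2)$ is exactly $\{(s^2, t^2): |s - t| \le \sigma^{-1}\Delta,\ s,t\ge 0\}$ by placing $y$ appropriately, and then $f(r_1)/f(r_2) \le f((t - \sigma^{-1}\Delta)_+^2)/f(t^2)$ by monotonicity, reducing to the one-parameter family. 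A secondary, more bookkeeping-level obstacle is reconciling the exact power of $\sigma$ in the stated formula $f((c-\sigma^{-2}\Delta)^2)/f(c^2)$ with the natural $\sigma^{-1}$ scaling from the change of variables; this is purely a matter of which variable is called $c$, but it should be stated carefully so the reader can check the normalization.
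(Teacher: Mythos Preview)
Your proposal is correct and follows essentially the same approach as the paper: reduce the density ratio to a one-dimensional problem via the affine change of variables $y=\sigma^{-1}\Sigma^{-1/2}(x-\mu_2)$, use monotonicity of $f$ to argue the extremal configuration is collinear (the paper phrases this as ``rotate $y$ to point in the direction of $u=\Sigma^{-1/2}(\mu_2-\mu_1)$'' with $\|y\|$ fixed), and then bound the resulting one-parameter supremum using $f(0)<\infty$ for small $c$ and \eqref{eq:limit} for large $c$. Your reindexing $c'=c+\sigma^{-1}\Delta$ accomplishes the same restriction to $c'\ge\sigma^{-1}\Delta$ that the paper obtains by a reflection argument, and you are right that the $\sigma^{-2}\Delta$ in the displayed formula is a typo for $\sigma^{-1}\Delta$ --- the paper's own proof confirms this by writing ``replace $\Delta$ with $\sigma^{-1}\Delta$.''
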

The proof of Theorem \ref{thm:MvtElliptical} is based on the ratio of the densities, and is in \ref{s:proofs}. 
Next we apply Theorem \ref{thm:MvtElliptical} to several common distributions.

\begin{example}[Independent Laplace]
Independent Laplace random variables are a common tool for achieving $\ep$-DP. The density of this mechanism is proportional to 
$f(x) \propto \exp\left(-\sum_{i=1}^d|x_i - \mu_i|/\sigma_i\right).$ 
While it is easily proved that this mechanism can be used to satisfy $\ep$-DP, this distribution is not elliptical, since the density cannot be written as a function of $(x-\mu)^\top \Sigma^{-1}(x-\mu)$ for any $\mu$ and $\Sigma$.
\end{example}

A natural idea is to use the elliptical multivariate Laplace distribution to try to achieve $\ep$-DP for multi-dimensional outputs. Surprisingly, the following example shows that while the tail behavior of the multivariate Laplace is sufficient to satisfy \eqref{eq:limit}, the multivariate Laplace distribution cannot be used to achieve $\ep$-DP when $d \geq 2$, since it has a pole (i.e. goes to infinity) at its center.
\begin{example}[Multivariate Laplace]
A $d$-dimensional random variable $X\sim \mathrm{Laplace}(\mu, \Sigma)$ has density equal to 
\[2(2\pi)^{-d/2}|\Sigma|^{-1/2} \left((x-\mu)^\top \Sigma^{-1} (x-\mu)/2\right)^{\nu/2} K_\nu(\sqrt{2(x-\mu)^\top \Sigma^{-1} (x-\mu)}),\]
where $\nu = \frac{2-d}{2}$ and $K_\nu$ is the modified Bessel function of the second kind. This density is proportional to 
$\displaystyle f((x-\mu)^\top \Sigma^{-1}(x-\mu))$, 
where 
$\displaystyle f(y) = \left(y/2\right)^{\nu/2} K_\nu(\sqrt{2y}).$ 
The reason this distribution is called the multivariate Laplace distribution is that it is the only family of distributions such that every marginal distribution is also distributed as Laplace {  (iid Laplace does not have this property)}. 

First, let's check whether \eqref{eq:limit} is finite. We use the fact that $K_\nu(z) = c \exp(-z) z^{-1/2} (1+O(1/z))$ as $z \to \infty$, where $c$ is a constant \citep[Chapter 9]{abramowitz1965}.  
Then 
\[
    \lim_{c\rightarrow \infty} \frac{f((c-\Delta)^2)}{f(c^2)}= \lim_{c\rightarrow \infty} 
    \frac{\left(\frac{c-\Delta}{2}\right)^{\nu}K_\nu(\sqrt 2 (c-\Delta))}{\left(\frac c2\right)^\nu K_\nu(\sqrt 2 c)}
    =\lim_{c\rightarrow \infty} \frac{\exp(-\sqrt 2(c-\Delta))\sqrt {c}}{\exp(-\sqrt 2c)\sqrt{c-\Delta}}
    = \exp(\sqrt 2 \Delta).
    \]
We see that the tails of the multivariate Laplace distribution are heavy enough to satisfy $\ep$-DP. However, it turns out that there is another problem in this case, which is that $f(x)$ has a pole at $x=0$. We use the fact that for $0<x\ll \sqrt{|\nu|+1}$, as $x\rightarrow 0^+$, $K_\nu(x)$ is asymptotically similar to
\[K_{\nu}(x) \sim \begin{cases}
-\log(x)&\text{if } \nu=0\\
\frac{\Gamma(\nu)}{2}(2/x)^{|\nu|}&\text{if }\nu\neq 0,
\end{cases}\]
where $\gamma$ is a constant 
\citep[Chapter 9]{abramowitz1965}. 
Then
\begin{align*}
\lim_{y\rightarrow 0^+} f(y)&=\left(\frac y2\right)^{\nu/2} K_\nu(\sqrt{2y})
\propto\lim_{y\rightarrow 0^+}
\begin{cases}
\exp(-\sqrt{y})&\text{if }d=1\\
-\frac 12\log(2y)&\text{if } d=2\\
(y/2)^{\nu/2}(\frac{2}{\sqrt {2y}})^{|\nu|}&\text{if } d\geq 3
\end{cases}
%
\end{align*}
From this, we see that the limit is finite when $d=1$, but infinite when $d\geq 2$. So, the multivariate Laplace distribution cannot be used to achieve $\ep$-DP for $d\geq 2$.
\end{example}

 While we may have supposed that the multivariate Laplace distribution would be well suited for $\ep$-DP, in fact it seems that the $K$-norm mechanism, introduced by \citet{Hardt2010}, is a better generalization of the Laplace mechanism, since it is carefully tuned for privacy.
\begin{example}[$K$-Norm Mechanism]
For any norm $\lVert \cdot \rVert_K$, the $K$-norm mechanism with mean $\mu$ draws from the density proportional to 
$ \exp(-\lVert x-\mu\rVert_K)$. 
For norms of the form $\lVert x\rVert = \sqrt{x^\top \Sigma^{-1}x}$, the $K$-norm mechanism is an elliptical distribution, with density is proportional to $f((x-\mu)^\top \Sigma^{-1} (x-\mu))$, where $f(y)= \exp(-\sqrt{y})$. 
First note that there is no concern about poles, since $f(0)$ is finite.

For any $c\geq \Delta$, we have that  
$\displaystyle \frac{\exp(-\sqrt{(c-\Delta)^2})}{\exp(-\sqrt{c^2})} = \frac{\exp(-(c-\Delta))}{\exp(-c)} = \exp(\Delta),$ 
which is constant. This suggests that this distribution is especially suited for $\ep$-DP. 
\end{example}

It is well known in the DP community that Gaussian noise cannot be used to achieve $\ep$-DP. We show in the next example how Theorem \ref{thm:MvtElliptical} can be used to {  easily} verify this fact.
\begin{example}[Multivariate Normal]
The density of a multivariate normal $N(\mu, \Sigma)$ has density proportional to $f((x-\mu)^\top\Sigma^{-1}(x-\mu))$, where 
$\displaystyle f(y)= \exp\left(-y/2\right).$ 
If $\Delta>0$, then 
\begin{align*}
    \lim_{c\rightarrow \infty} \exp\left(- \left(c-\Delta\right)^2/2\right)/\exp\left(-c^2/2\right)
    &=\lim_{c\rightarrow \infty} \exp\left(\frac 12 \left(c^2-\left[c^2-2c\Delta+\Delta^2\right]\right)\right)
    = \infty.
\end{align*}
\end{example}

The previous result confirms that the tails of the Normal distribution are too light to achieve $\ep$-DP. In contrast with the previous example, we show next that the multivariate $t$-distribution can achieve $\ep$-DP, but its tails are maybe ``over-kill".
\begin{example}[Multivariate $t$-distribution]
A $d$ dimensional $t$ random vector with degrees of freedom $\nu>1$, denoted $t^d_\nu(\mu, \Sigma)$ has density proportional to $f((x-\mu)^\top \Sigma^{-1}(x-\mu))$, where 
\[f(y) = \left[1+ y/\nu\right]^{-(\nu+d)/2}.\]
We check the limit:
$\displaystyle
    \lim_{c\rightarrow \infty} \frac{[1+(c-\Delta)^2/\nu]^{-(\nu+d)/2}}{[1+ c^2/\nu]^{-(\nu+d)/2}}
    =\lim_{c\rightarrow \infty} \left[\frac{1+ c^2/\nu}{1+(c-\Delta)^2/\nu}\right]^{(\nu+d)/2}
    =1.
    $

Since the limit is finite, we know that there is a finite supremum. We solve 
$\displaystyle \frac{d}{dc} \left[\frac{1+ c^2/\nu}{1+(c-\Delta)^2/\nu}\right]=0$, and
 find that the unique solution in $[\Delta, \infty)$ is 
$c = \frac 12\left( \Delta + \left(\sqrt{\Delta^2+4\nu}\right)\right).$
Plugging this into  $\left[\frac{1+ c^2/\nu}{1+(c-\Delta)^2/\nu}\right]^{(\nu+d)/2}$ gives us the value of $\exp(\ep)$.
\end{example}

We end this subsection with a result for the original infinite dimensional problem:  if $\mbX$ is infinite dimensional, then Theorem \ref{thm:MvtElliptical} can be used to achieve $\ep$-DP for a set of $d$ linear functionals from $\mcK$.  

\begin{corollary}\label{cor:proj}
Assume $\mbX$ is an LCS of potentially infinite dimension.  Let $T: \mcD \to \mbX$ be a summary with finite sensitivity $\Delta< \infty$ with respect to an elliptical noise $X \in \mbX$.  Then for any distinct $g_i \in \mcK$ for $i=1,\dots,d$, the density of $\{g_i(\widetilde T_D)\}$ is proportional to $f( \sigma^{-2} (x - \mu_D) \Sigma^{-1}(x - \mu_D))$, where $\mu_D = \{g_i(T_D)\}$, $\Sigma=\{C(g_i,g_j)\}$, and $f:[0,\infty) \to [0,\infty]$ is a monotonically decreasing function depending on $d$ and the elliptical family, but not the specific $g_i$.  If $f(0) < \infty$, and property \eqref{eq:limit} of Theorem \ref{thm:MvtElliptical} holds, 
then $\{g_i(\widetilde T_D)\}$ satisfies $\ep$-DP, where \ \ 
$\displaystyle \exp(\ep) = \sup_{c\geq \sigma^{-1}\Delta} \frac{f((c- \sigma^{-2}\Delta)^2)}{f(c^2)} < \infty$.
\end{corollary}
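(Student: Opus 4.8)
The plan is to deduce the corollary from the finite-dimensional Theorem \ref{thm:MvtElliptical} by showing that the $d$-vector $(g_1(\widetilde T_D),\dots,g_d(\widetilde T_D))$ has precisely the elliptical Lebesgue density assumed there, and that its sensitivity in the $\Sigma^{-1}$-norm is controlled by the full-space sensitivity $\Delta$. For the first part, apply Theorem \ref{c:gauss} to write the noise as $X\overset{\mcL}{=}VZ$ with $Z\in\mbX$ mean-zero Gaussian with covariance $C$ and $V\sim\psi$ an independent strictly positive scalar. Since $\mcK$ is the $L^2$-closure of $\mbX^*$ under the Gaussian inner product, each $g_i$ has a measurable-linear version and $(g_1(Z),\dots,g_d(Z))$ is a centered Gaussian vector with covariance $\Sigma=\{\langle g_i,g_j\rangle_\mcK\}=\{C(g_i,g_j)\}$, which is nonsingular provided the $g_i$ are linearly independent in $\mcK$. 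Hence $(g_1(X),\dots,g_d(X))\overset{\mcL}{=}V\cdot(g_1(Z),\dots,g_d(Z))$, and conditioning on $V=v$ and integrating against $\psi$ gives this vector a density on $\mbR^d$ proportional to $f(x^\top\Sigma^{-1}x)$ with $f(r)=\int_0^\infty(2\pi v^2)^{-d/2}\exp(-r/2v^2)\,d\psi(v)$; this $f$ is strictly decreasing, depends only on $d$ and $\psi$ (the dependence on $\Sigma$ sits entirely in the normalizing constant $|\Sigma|^{-1/2}$), and $f(0)=\int_0^\infty(2\pi v^2)^{-d/2}\,d\psi(v)$ is finite exactly under the stated hypothesis. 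Translating by $T_D$ and rescaling by $\sigma$ produces the density $\propto f(\sigma^{-2}(x-\mu_D)^\top\Sigma^{-1}(x-\mu_D))$ with $\mu_D=\{g_i(T_D)\}$, which is the form required by Theorem \ref{thm:MvtElliptical}.

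For the sensitivity, note that by Theorem \ref{thm:equiv} a finite sensitivity forces $h:=T_D-T_{D'}$ to lie in the Cameron-Martin space $\mbH$ for every adjacent pair, so $h=C(g)$ for some $g\in\mcK$ with $\|g\|_\mcK=\|h\|_\mbH\le\Delta$. Using $g_i(C(g))=\langle g_i,g\rangle_\mcK$ (valid for $g_i\in\mbX^*$ from $f(C(g))=C(f,g)$ and extended to $\mcK$ by continuity of $T_{C(g)}$), and writing $P_d g=\sum_j a_j g_j$ for the orthogonal projection of $g$ onto $\mathrm{span}\{g_1,\dots,g_d\}$ in $\mcK$, one gets $\mu_D-\mu_{D'}=\{g_i(h)\}=\{\langle g_i,g\rangle_\mcK\}=\Sigma a$, whence $\|\Sigma^{-1/2}(\mu_D-\mu_{D'})\|_2^2=a^\top\Sigma a=\|P_d g\|_\mcK^2\le\|g\|_\mcK^2\le\Delta^2$. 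Thus the projected mechanism on $\mbR^d$ has sensitivity at most $\Delta$.

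Now invoke Theorem \ref{thm:MvtElliptical}: the projected family is a $d$-dimensional elliptical mechanism of the covered form, with $f(0)<\infty$ and property \eqref{eq:limit} inherited from the hypotheses, and with sensitivity at most $\Delta$. The quantity $\exp(\ep)$ appearing there is nondecreasing in the sensitivity bound — a larger bound only enlarges the set of mean-shifts over which the density ratio is maximized — so substituting $\Delta$ for the (possibly smaller) true projected sensitivity still yields a valid guarantee. This gives $\ep$-DP for $\{g_i(\widetilde T_D)\}$ with $\exp(\ep)=\sup_{c\ge\sigma^{-1}\Delta}f((c-\sigma^{-2}\Delta)^2)/f(c^2)<\infty$, as claimed.

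The step requiring the most care is the first: for $g_i\in\mcK\setminus\mbX^*$ the $g_i$ are not genuine continuous functionals on $\mbX$ but only $L^2$-equivalence classes of measurable linear maps, so ``the density of $\{g_i(\widetilde T_D)\}$'' must be read through these a.e.-defined versions, and one must confirm that the representation of Theorem \ref{c:gauss}, stated for honest linear functionals, transfers to them and yields the correct \emph{joint} law of the $d$ coordinates rather than merely the right marginals. The scalar-mixture structure $X\overset{\mcL}{=}VZ$, together with the fact that elements of $\mcK$ act on the Gaussian part $Z$ as $L^2$ limits, is what makes this go through; the sensitivity bound and the appeal to Theorem \ref{thm:MvtElliptical} are then short.
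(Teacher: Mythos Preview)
Your proposal is correct and follows exactly the route the paper intends: the corollary is stated without a separate proof in the paper, being meant as an immediate consequence of Theorem~\ref{thm:MvtElliptical} once one observes that (i) any $d$-tuple of $\mcK$-functionals applied to an elliptical process yields a $d$-dimensional elliptical vector whose radial function $f$ depends only on $d$ and $\psi$, and (ii) the $\Sigma^{-1}$-norm sensitivity of the projected means is dominated by the Cameron--Martin sensitivity $\Delta$. You have carefully supplied both of these steps---the $VZ$ mixture computation for (i) and the $\mcK$-projection argument $\|\Sigma^{-1/2}(\mu_D-\mu_{D'})\|_2=\|P_d g\|_\mcK\le\|g\|_\mcK\le\Delta$ for (ii)---and your closing paragraph correctly flags the one genuine subtlety (that $g_i\in\mcK\setminus\mbX^*$ are only a.e.-defined measurable linear functionals), which the paper leaves implicit.
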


The key point of Corollary \ref{cor:proj} is that there is a universal $\sigma$ such that $\widetilde T_D$ achieves $\epsilon$-DP when evaluated on any $d$ linear functionals.  Unfortunately, it does depend on $d$, and as we will see in the next section, there is no finite $\sigma$ that can guarantee $\epsilon$-DP for arbitrary $d$ when using an elliptical perturbation.

\subsection{Impossibility in Infinite Dimensional Spaces}\label{s:infinite}
In the previous subsection we gave a condition to check whether an elliptical distribution can be used to satisfy $\ep$-DP in finite dimensional spaces. It is natural to suppose that a similar property holds in infinite dimensional spaces. However, our main result in this section is that no elliptical distribution satisfies $\ep$-DP in infinite dimensional spaces. The intuition behind this result is that by Corollary \ref{c:gauss}, any elliptical process can be expressed as a random mixture of Gaussian processes, but in infinite dimensional spaces, the mixing variable $V$ is actually measurable with respect to the infinite dimensional process. {That is, if one observes $\tilde T_D = T_D + \sigma X$, then with probability one, the mixing random variable $V$ can be computed from $\tilde T_D$.  This is because one can pool small amounts of information across an infinite number of dimensions estimate $V$ (even though $X$ still isn't observable).} So, the noise from any elliptical distribution is equivalent (as far as privacy goes) to adding noise from a Gaussian process, which \citet{mirshani2017existence} show only satisfies $(\ep,\de)$-DP, a weaker notion of differential privacy than $\ep$-DP.

\begin{theorem}
    Consider a summary $T : \mcD \to \mbX$ and let $\widetilde T_D = T_D + \sigma X$, where $X$ is a centered elliptical distribution and $T_D:=T(D)$.  If $\mbX$ is infinite dimensional, the image $T(\mcD)$ is a not a singleton, and $C$ does not have finite rank, then $\widetilde T_D$ will not achieve $\epsilon$-DP for any choice of $\sigma$.
\end{theorem}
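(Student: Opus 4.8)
The plan is to combine the scale--mixture representation (Theorem~\ref{c:gauss}) with the observation that, in infinite dimensions with $C$ of infinite rank, the mixing scalar $V$ is almost surely a measurable function of the sample path, and then to reduce to the known failure of pure DP for a Gaussian shift. \emph{Step~1 (reduction to one pair).} Since $T(\mcD)$ is not a singleton we may pick adjacent $D\sim D'$ with $h\defeq T_D-T_{D'}\neq 0$ (if no adjacent pair has distinct images the mechanism is trivially $0$-DP and there is nothing to prove). If $h$ does not lie in the Cameron--Martin space $\mbH$ of $C$, Theorem~\ref{thm:equiv} gives $P_D\perp P_{D'}$, so the family is not even $(\ep,\de)$-DP; hence assume $h\in\mbH\setminus\{0\}$, and, absorbing $\sigma$ into the mixing law, take $\sigma=1$. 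Since $X$ is centered, Theorem~\ref{c:gauss} gives $X\overset{\mcL}{=}VZ$ with $Z$ a centered Gaussian process with covariance $C$ and $V>0$ independent of $Z$; let $Q$ be the law of $VZ$ and $\psi$ the law of $V$. Then $P_{D'}=(\tau_{T_{D'}})_\# Q$ and $P_D=(\tau_{T_{D'}})_\#(\tau_h)_\# Q$, where $\tau_a(x)=x+a$, so, $\tau_{T_{D'}}$ being a Borel bijection, it suffices to show $Q$ and $(\tau_h)_\# Q$ violate the $\ep$-DP inequality for every $\ep$.

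\emph{Step~2 (recovering the scale).} Because $\mbX^*$ is dense in $\mcK$ and $C$ has infinite rank, we may choose an orthonormal system $\{f_k\}_{k\ge1}\subset\mbX^*$ of $\mcK$ whose closed span contains the $\mcK$-representer of $h$; then $\sum_k f_k(h)^2=\lVert h\rVert_{\mbH}^2\in(0,\infty)$ and $a\defeq f_{k_0}(h)\neq0$ for some $k_0$ (replace $f_{k_0}$ by $-f_{k_0}$ so $a>0$). Orthonormality in $\mcK$ makes $\{f_k(Z)\}$ i.i.d.\ $N(0,1)$, so by the strong law $n^{-1}\sum_{k\le n}f_k(Z)^2\to1$ a.s.; define the Borel map $q(x)\defeq\bigl(\limsup_n n^{-1}\sum_{k\le n}f_k(x)^2\bigr)^{1/2}$. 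A Cauchy--Schwarz estimate using $\sum_k f_k(h')^2<\infty$ then yields $q(h'+VZ)=V$ a.s.\ for each fixed $h'\in\mbH$; in particular $q(VZ)=V$ a.s.\ and $q\circ\tau_h=q$ both $Q$- and $(\tau_h)_\# Q$-a.s. Since $q(VZ)=V$ a.s., the disintegration of $Q$ along $q$ is $Q=\int N(0,v^2C)\,d\psi(v)$.

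\emph{Step~3 (contradiction).} Fix an interval $S=[v_0,v_1]\subset(0,\infty)$ with $\psi(S)>0$, and for $M>0$ set $A_M=\{x:f_{k_0}(x)>M\}\cap q^{-1}(S)$. Writing $\bar\Phi$ for the standard normal survival function and using $q\circ\tau_h=q$ a.s.\ together with the disintegration,
\[
\frac{(\tau_h)_\# Q(A_M)}{Q(A_M)}
=\frac{\int_S N(h,v^2C)(\{f_{k_0}>M\})\,d\psi(v)}{\int_S N(0,v^2C)(\{f_{k_0}>M\})\,d\psi(v)}
\ \ge\ \inf_{v\in S}\frac{\bar\Phi((M-a)/v)}{\bar\Phi(M/v)},
\]
since under $N(0,v^2C)$ the functional $f_{k_0}$ is $N(0,v^2)$ while under $N(h,v^2C)$ it is $N(a,v^2)$. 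By the usual Gaussian tail asymptotics the last ratio behaves like $\exp\!\bigl(aM/v^2-a^2/(2v^2)\bigr)$ and tends to $\infty$ as $M\to\infty$, uniformly for $v$ in the compact set $S$. Hence $(\tau_h)_\# Q(A_M)/Q(A_M)\to\infty$, so no finite $\ep$ works; translating back by $T_{D'}$ (Step~1) shows $\widetilde T_D$ fails $\ep$-DP for every $\sigma$.

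\emph{Main obstacle.} The delicate part is Step~2: constructing the Borel functional $q$, proving $q(VZ)=V$ a.s., and especially the Cameron--Martin shift-invariance $q\circ\tau_{h'}=q$ for $h'\in\mbH$. Both facts depend on $C$ having infinite rank (so the strong law involves infinitely many i.i.d.\ terms) and on the Cameron--Martin bookkeeping --- identifying $\sum_k f_k(h')^2$ with $\lVert h'\rVert_{\mbH}^2$ and choosing Borel representatives for elements of $\mcK$, which a priori are defined only up to Gaussian-null sets --- i.e.\ exactly the infinite-dimensional machinery of \citet{mirshani2017existence}. The reduction in Step~1 and the Gaussian tail computation in Step~3 are routine by comparison.
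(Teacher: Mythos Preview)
Your proof is correct and follows essentially the same approach as the paper: both recover the mixing scalar $V$ almost surely from the observed path via a strong law of large numbers applied to $n^{-1}\sum_{k\le n} f_k(\widetilde T_D)^2$ with $\{f_k\}$ orthonormal in $\mcK$, and then exploit that, conditionally on $V$, the mechanism is Gaussian and hence not $\epsilon$-DP. Your Step~3 is actually more careful than the paper's---the paper produces a violating set $A_v$ for each $v$ and asserts that $V$ ``can be treated as fixed,'' whereas you localize to $q^{-1}(S)$ for a compact interval $S$ with $\psi(S)>0$ and use uniform Gaussian tail asymptotics to exhibit a \emph{single} Borel set on which the DP inequality fails, making explicit a step the paper leaves informal.
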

\begin{proof}[Proof Sketch.]
Consider functionals $g_i \in \mcK$ such that $C(g_i, g_j) = \delta_{ij}$. The estimator 
    $V_n = \frac{1}{n} \sum_{i=1}^n g_i(\widetilde T_D)^2$ 
    converges to $V^2$ with probability 1 as $n\rightarrow \infty$, recovering $V$ from $\widetilde T_D$.
\end{proof}

Fortunately, elliptical distributions can still achieve $(\epsilon,\delta)$-DP.  However, we run into a bit of an odd philosophical issue since the mixing coefficient $V$ can be computed from $\tilde f(D)$. 
{  So, the mechanism can be viewed as drawing from a mixture of Gaussian processes, but after observing the output the user knows exactly from which Gaussian distribution the noise came from.}
\begin{theorem}\label{th:eddp}
    Let $X$ be a centered elliptical process over $\mbX$ and $T:\mcD \to \mbX$ has sensitivity $\Delta$.  Then for any $\ep>0$ and $\de>0$,
    \[
    \widetilde T_D = T_D + \sigma X,
    \qquad \text{with} \qquad
    \sigma^2 \geq  \frac{2\log(2/\delta')}{\epsilon^2} \Delta^2
    \]
    achieves $(\epsilon,\delta)$-DP, where $\delta'$ satisfies $\delta = 2 M_V(\log(\delta'/2))$ and $M_V$ is the moment generating function of mixing coefficient $V$, as defined in Theorem \ref{c:gauss}.
\end{theorem}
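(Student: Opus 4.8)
The plan is to reduce the elliptical mechanism to a Gaussian one by conditioning on the mixing coefficient, to apply the infinite-dimensional Gaussian mechanism of \citet{mirshani2017existence} conditionally, and then to average the resulting conditional $\de$ over the law of $V$. By Theorem \ref{c:gauss} we may write $X \overset{\mcL}{=} V Z$, where $Z$ is a centered Gaussian process on $\mbX$ with covariance operator $C$, $V>0$ is independent of $Z$, and $V$ has mixing law $\psi$. Then $\widetilde T_D = T_D + \sigma V Z$, so conditionally on $V=v$ the output $P_{D,v}$ is a Gaussian measure centered at $T_D$ with covariance $\sigma^2 v^2 C$. Since $\lVert T_D - T_{D'}\rVert_\mcH \le \Delta < \infty$ for every adjacent pair $D \sim D'$, the shift $T_D - T_{D'}$ lies in the Cameron-Martin space of $C$, hence also in that of $\sigma^2 v^2 C$ for each $v>0$ (multiplying the covariance by a positive scalar rescales the Cameron-Martin norm but not the subspace, exactly as in the proof of Theorem \ref{thm:equiv}). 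Thus $P_{D,v}$ and $P_{D',v}$ are mutually equivalent, and a Cameron-Martin density computation shows that, under $P_{D,v}$, the conditional privacy-loss variable $L_v = \log\frac{dP_{D,v}}{dP_{D',v}}(\widetilde T_D)$ is Gaussian with mean $m_v = \lVert T_D - T_{D'}\rVert_\mcH^2/(2\sigma^2 v^2)$ and variance $2m_v$, with $m_v \le \Delta^2/(2\sigma^2 v^2)$.

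From the (infinite-dimensional, Borel) Gaussian mechanism, controlling $\Pr_{P_{D,v}}(L_v > \ep)$ in the worst case $\lVert T_D - T_{D'}\rVert_\mcH = \Delta$ gives that $P_{D,v}$ satisfies $(\ep,\de_v)$-DP relative to $P_{D',v}$ with $\de_v = 2\exp\!\left(-\ep^2\sigma^2 v^2/(2\Delta^2)\right)$, uniformly over adjacent pairs. Because $V$ is a property of the noise alone and not of the database, we integrate this conditional guarantee against $\psi$: for any Borel $A$,
\[
P_D(A) = \int P_{D,v}(A)\,d\psi(v) \le e^\ep \int P_{D',v}(A)\,d\psi(v) + \int \de_v\,d\psi(v) = e^\ep P_{D'}(A) + \E_\psi[\de_V].
\]
With $\de' := 2\exp(-\ep^2\sigma^2/(2\Delta^2))$, so that $\log(\de'/2) = -\ep^2\sigma^2/(2\Delta^2)$, the hypothesis $\sigma^2 \ge \tfrac{2\log(2/\de')}{\ep^2}\Delta^2$ forces $\de_v \le 2\exp\!\big(v^2\log(\de'/2)\big)$ for every $v$, hence $\E_\psi[\de_V] \le 2\,\E_\psi\!\big[\exp(V^2\log(\de'/2))\big] = 2\,M_V(\log(\de'/2)) = \de$, which is the claim; monotonicity of $M_V$ on $(-\infty,0]$ together with $\log(\de'/2)$ decreasing in $\sigma$ makes the inequality on $\sigma^2$ (rather than equality) suffice.

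The main obstacle is the second step: pinning down the explicit conditional $\de_v$ in the genuinely infinite-dimensional setting, where the cylinder and Borel $\sigma$-algebras differ, and doing so uniformly over all adjacent pairs --- this is exactly what one imports from \citet{mirshani2017existence} and its Cameron-Martin density calculus (the conditioning on $V=v$ is what lets that machinery apply verbatim). A secondary but necessary point is the measure-theoretic justification of the disintegration $P_D(A) = \int P_{D,v}(A)\,d\psi(v)$ and the accompanying Fubini step; this is clean here only because $V$ is real-valued and independent of $Z$, so the joint law of $(V,Z)$ factors as a product measure.
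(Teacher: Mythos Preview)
Your approach is exactly the paper's: condition on the mixing coefficient $V=v$, apply the infinite-dimensional Gaussian mechanism of \citet{mirshani2017existence} to obtain a conditional $(\ep,\de_v)$ guarantee, and then integrate $\de_v$ against $\psi$.

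There is, however, a genuine slip in your last step. From your scaling you obtain $\de_v = 2\exp\bigl(v^{2}\log(\de'/2)\bigr)$ and hence $\E_\psi[\de_V]=2\,\E\bigl[\exp\bigl(V^{2}\log(\de'/2)\bigr)\bigr]$. You then equate this with $2\,M_V(\log(\de'/2))$. But the moment generating function of $V$ is $M_V(t)=\E[e^{tV}]$, not $\E[e^{tV^{2}}]$; what you have actually written is $2\,M_{V^{2}}(\log(\de'/2))$, which is not the quantity appearing in the theorem statement.

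For comparison, the paper's own proof carries the exponent \emph{linearly} in $v$: from $\sigma^{2}=\tfrac{2\log(2/\de')}{\ep^{2}}\Delta^{2}$ and conditional noise $\sigma v Z$ it writes $v\log(2/\de')=\log\bigl((2/\de')^{v}\bigr)$, yielding $\de_v=2(\de'/2)^{v}$ and therefore $\E[\de_V]=2\,\E[(\de'/2)^{V}]=2\,M_V(\log(\de'/2))$, matching the statement. Your $v^{2}$ arises because you square the effective scale when reading off the Gaussian bound, whereas the paper absorbs only one factor of $v$ into $\de'$. Whichever accounting one adopts, the averaged $\de$ and the moment transform must be consistent: under your $v^{2}$ bookkeeping the correct object would be $M_{V^{2}}$, not $M_V$, so the final equality in your sketch does not hold as written.
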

In Theorem \ref{th:eddp}, $\delta'$ represents the DP that would be achieved under the Gaussian mechanism, thus one will end up with better privacy if $\delta < \delta'$.  In addition, for $\delta' \in (0,1)$, $\log(\delta'/2) < 0$, so $M_V(\log(\delta'/2))$ is finite and all quantities are well defined. The proof of Theorem \ref{th:eddp} is similar to the proof of \citet[Theorem 3.3]{mirshani2017existence}, and is in Section \ref{s:proofs}.

\section{Discussion}\label{s:discussion}

In this work we considered a new class of additive privacy mechanisms based on elliptical distributions.  We also presented a number of foundational results concerning the equivalence/orthogonality of elliptical distributions.  These mechanisms were considered under the general assumption that the summary resides in a locally convex space, allowing for a wide range of applications from classic multivariate statistics to nonparametric statistics and functional data analysis.  Surprisingly, we show that while many elliptical distributions may be used for pure DP in finite dimensions, none are capable of achieving it in infinite dimensions.  This is due to the close connection between Gaussian processes and elliptical processes, and both can only achieve approximate DP in infinite dimensions.  

{This work also highlights the need for more tools when the statistical summaries are complex objects such as functions.  Properties that hold in finite dimensions may not hold in infinite dimensions in some surprisingly subtle ways.  Practically, this can mean that either one does not have the desired level of protection against privacy disclosures, or that one has to add enormous amounts of noise to achieve pure DP.  
}

{  While this paper has focused on the question of whether an elliptical distribution satsfies DP, we do not address the utility of these mechanisms. There is already evidence that elliptical distributions are useful for different applications (i.e. \citealp{Awan2019:Structure,bun2019average}), but further work establishing utility guarantees for elliptical distributions is needed. }

\bibliographystyle{awan}
\bibliography{ellip}

\section{Proofs}\label{s:proofs}

\begin{proof}[Proof of Lemma 1.]
Suppose that $P^{1}(A) = 0$.  By definition of conditional measures we have that 
$\displaystyle
P^{1}(A) = \int P^{1}_t(A) dQ^{1}(t).
$ 
However, if the above is zero, then it must be the case that $P^{1}_t(A) = 0$ for all $t$ except possibly on a set of $Q^{1}$ measure zero.  By the assumed equivalence of $Q^1$ and $Q^2$ as well as the conditional measures, it follows that $P^{2}_t(A) = 0$ for all $t$ except possibly on a set of $Q^{2}$ measure zero.  Therefore 
$\displaystyle
P^{2}(A) = \int P^{2}_t(A) dQ^2(t) = 0.
$ 
Since we can swap the roles of $P^1$ and $P^2$, it follows that they agree on the zero sets.  
\end{proof}

\begin{proof}[Proof of Theorem 2.]
The first direction is easier.  Namely, if $\mu_1-\mu_2$ resides in the Cameron-Martin space of $C$ then it resides in the Cameron-Martin space of any $v C$ for $v >0$. {To see this, note that $C$ and $vC$ induce equivalent norms; the latter is just scaled by $v$ compared to the former.  Since the norms are equivalent, the resulting Cameron-Martin spaces are the same}. Thus, conditioned on the mixture $V=v$, the measures are equivalent for all $v$ \citep[Theorem 2.4.5 ]{bogachev1998gaussian}.  Thus Lemma 1 implies they are equivalent.

The harder part is the reverse.  If the mixture is discrete, i.e. $V$ takes on at most a countable number of values, then we could, in principle, piece together a countable number of appropriate spaces (since $\sigma$-algebras are closed under countable unions).  However, when $V$ is continuous, this approach won't work as it requires considering an uncountable number of sets.  Thus, we have to be more explicit in terms of our construction.  We consider, without loss of generality, $X_1 \sim \mcE(0, C,\psi)$ versus $X_2 \sim \mcE(\mu, C,\psi)$ where $\mu$ is not in $\mbH$, the Cameron-Martin space of $C$. 
To show that the two measures are orthogonal, it is enough to show that, for any fixed $\epsilon \in (0,1)$ we can construct a set $A$ such that $P(X_1 \in A) \geq 1 - \epsilon$ while $P(X_2 \in A) \leq \epsilon$.  
Since $\mu$ is not in $\mbH$, it implies that the functional $T_\mu:\mcK \to \mbR$ defined as $T_\mu(f) = f(\mu)$ is not continuous, or equivalently, it is not bounded.  So we can construct a sequence $g_1,g_2,\dots \in \mcK$ such that $T_\mu(g_i) = g_i(\mu) \to \infty$ as $i \to \infty$, but $\|g_i\|_{\mcK} = 1$. 

Now the random variable $g_i(X_1) \in \mbR$ has the same distribution as $V g_i(Z)$ and $g_i(X_2)$ the same as $g_i(\mu_i) + V g_i(Z)$, where $g_i(Z)$ has, by construction, a standard normal distribution.  Let $c_\epsilon$ be a finite constant such that
\[
P(V  g_i(Z) \leq c_\epsilon) \geq 1- \epsilon,
\]
which does not depend on $i$ since $g_i(Z)$ is standard normal for all $i$.  
Define $A_i$ as
\[
A_i = \{x \in \mbX: g_i(x) \leq c_\epsilon \},
\]
which is a measurable set since any element of $\mcK$ is either an element of $\mbX^*$ or an appropriate limit.  Then, for any $c_\epsilon$ we have that
\[
P(g_i(X_2) \in A_i) = P(g_i(\mu) + V g_i(Z) \leq c_\epsilon) \to 0 \qquad \text{as} \qquad i \to \infty,
\]
since $g_i(\mu) \to \infty$.  So, we can choose $i$ and $A = A_i$ such that $P(g_i(X_2) \in A_i) \leq \epsilon$ and thus the distributions of $X_1$ and $X_2$ must be orthogonal.
\end{proof}

\begin{proof}[Proof of Theorem 3.]
    Since $\sigma^2 \Sigma$ is also a valid covariance matrix, we can always combine the two into one matrix, and thus, without loss of generality, we take $\sigma=1$.  To show $\epsilon$-DP it suffices to check that the ratio of the densities is bounded by $\exp(\epsilon)$, since the two are equivalent \citep{awan2019benefits}. Let $D,D'\in \mscr D$ be fixed adjacent databases, $D \sim D'$. Denote $\mu_1=T(D)$ and $\mu_2=T(D')$ and then the ratio of densities is given by
\[
    \sup_{x\in\mbR^d} \frac{f( (x -\mu_1)^\top \Sigma^{-1}(x-\mu_1))}{ f((x -\mu_2)^\top \Sigma^{-1}(x-\mu_2))}.
\]
Since $\Sigma$ has full rank, we can make the following change of variables without changing the maximization problem: $y = \Sigma^{-1/2} (x-\mu_2)$, which yields
\[
\sup_{y\in\mbR^n} \frac{f( (\Sigma^{1/2}y + \mu_2 -\mu_1)^\top \Sigma^{-1} (\Sigma^{1/2}y + \mu_2 -\mu_1))}{f(y^\top y)}.
\]
Let $u = \Sigma^{-1/2}(\mu_2 - \mu_1)$, then we equivalently have the maximization 
\[\sup_{y\in \mbR^n}\frac{f((y-u)^\top (y-u))}{f(y^\top y)}.\]
Next, notice that for any $y$, we can increase the ratio by rotating $y$ to point in the direction of $u$. This is because $f$ is decreasing and 
\[(y-u)^\top (y-u) = y^\top y - 2y^\top u + u^\top u,\]
will be made smaller if $y^\top u$ is made larger, and the largest it can be while fixing the length $\lVert y \rVert_2 = c$ is when $y = cu/\lVert u \rVert_2$. So, we can express the maximization problem as
\[
\sup_{c\geq 0}\frac{f(c^2 - 2c\lVert u \rVert + \lVert u \rVert^2)}{f(c^2)} = \sup_{c\geq 0} \frac{f((c-\lVert u \rVert)^2)}{f(c^2)}.
\]
Since $f$ is monotonically decreasing the above is finite if and only if $f(0)<\infty$ and $\limsup_{c \to \infty} f((c-\|u\|)^2) f(c^2)^{-1}$ is finite.
We can also restrict the supremum to $c \geq \|u\|$ as the supremum will never occur when $0 \leq c < \|u\|$.  To see this, consider $a \in \mbR$ such that $0\leq a<\lVert u \rVert$, and its reflection about $\|u\|$ given by $b= 2\lVert u \rVert-a$, which satisfies $b> \lVert u \rVert$.  Then we have the numerators are the same, $f((a-\|u\|)^2) = f((b-\|u\|)^2)$, but the denominators satisfy $f(b^2)\leq f(a^2)$ since $f$ is strictly decreasing, which implies the ratio at $c=b$ is larger than at $c=a$.

Finally, $u$ still depends on $\mu_1$ and $\mu_2$.  However, $\|u\| \leq \Delta$ and using that $f$ is monotonially decreasing, we have
\[\sup_{c\geq \|u\|}\frac{f((c - \|u\|)^2)}{f(c^2)}
\leq \sup_{c\geq \Delta}\frac{f((c - \Delta)^2)}{f(c^2)}
:= \exp\{\epsilon\}. 
\]
To obtain different values of $\epsilon$ we can replace $\Delta$ with $\sigma^{-1} \Delta$ and adjust $\sigma$ until the desired $\epsilon$ is achieved. 
\end{proof}

\begin{proof}[Proof of Theorem 4.]
    We can assume that $T_D \in \mbH$ as otherwise $\widetilde T_D$ and $\widetilde T_{D'}$ are orthogonal (in which case it is trivial that DP doesn't hold). 
    The key issue is that, in infinite dimensions, one learns too much about the mixing coefficient.  In particular, consider functionals $g_i \in \mcK$ such that $C(g_i, g_j) = \delta_{ij}$.  One can find an infinite number of such functionals as long as $C$ does not have finite rank.  Then consider
    \[
    V_n = \frac{1}{n} \sum_{i=1}^n g_i(\widetilde T_D)^2
    = \frac{1}{n} \sum_{i=1}^n g_i( T_D)^2 + 2 V \frac{1}{n} \sum_{i=1}^n g_i(Z) + V^2 \frac{1}{n} \sum_{i=1}^n T_i(Z)^2.
    \]
    Now notice that, by Parceval's identity, $\sum_{i=1}^n g_i(f(D))^2 \leq \| f(D)\|_{\mcK}$ and that $g_i(Z)$ are iid standard normal.  Thus the first two terms converge to 0 with probability 1, while the second term converges to $V^2$.  So, if we observe $\widetilde T_D$ then we can reconstruct $V$ perfectly (since $V>0$) and thus, in the DP calculation it can be treated as fixed, $V = v$.  Now notice that $\widetilde T_D | V=v$ is simply Gaussian and does not achieve $\epsilon$-DP, meaning, for any $\epsilon>0$ we can find a set $A_v$ where $P( \tilde f(D) \in A_v | V=v) > e^\epsilon P(\tilde f(D') \in A_v | V=v)$.  Thus the mechanism is not $\epsilon$-DP.
\end{proof}

\begin{proof}[Proof of Theorem 5.]
    Notice that conditioned on $V = v$, we have that
    $\tilde f(D) = f(D) + \sigma v Z$ and
    $\displaystyle\sigma^2 = \frac{2\log(2/\delta')}{\epsilon^2} \Delta^2$. 
    So, the noise is scaled by $v$.  If we absorb this into the $\delta'$ then
    $
    v \log(2/\delta') = \log((2/\delta')^v).
    $ 
    Finally, 
    \[
    P(\tilde f(D) \in A)  = \int P(\tilde f(D) \in A | V=v) d \psi(v)  \leq e^{\epsilon} P(\tilde f(D') \in A) + 2 \E[(\delta'/2)^V] \]
    \[
     = e^{\epsilon} P(\tilde f(D') \in A) + 2 M_V(\log(\delta'/2))  =  e^{\epsilon} P(\tilde f(D') \in A) + \delta.\qedhere
    \]
\end{proof}


\end{document}